\DeclareFontFamily{OT1}{pzc}{}
\DeclareFontShape{OT1}{pzc}{m}{it}{<-> s * [1.10] pzcmi7t}{}
\DeclareMathAlphabet{\mathpzc}{OT1}{pzc}{m}{it}
\newtheorem{theorem}{Theorem}[section]
\newtheorem{lemma}[theorem]{Lemma}
\newtheorem{corollary}[theorem]{Corollary}
\providecommand{\Order}{\mathbb{O}}
\providecommand{\R}{\mathbb{R}}
\providecommand{\E}{\mathbb{E}} 
\providecommand{\SO}{\mathbf{SO}}
\providecommand{\grpG}{\mathbf{G}}
\providecommand{\so}{\mathfrak{so}}
\providecommand{\Sph}{\mathrm{S}}
\providecommand{\tT}{\mathrm{T}} 
\providecommand{\Id}{I} 
\providecommand{\GP}{\mathbf{N}} 
\providecommand{\tL}{\mathrm{L}} 
\DeclareMathOperator{\tr}{tr}
\DeclareMathOperator{\Ad}{Ad}
\DeclareMathOperator{\ad}{ad}
\DeclareMathOperator*{\argmin}{argmin}
\providecommand{\id}{\mathrm{id}} 
\providecommand{\tD}{\mathrm{D}}
\providecommand{\tL}{\mathrm{L}}
\providecommand{\scirc}{%
    \hbox{\fontfamily{\rmdefault}\fontsize{0.4\dimexpr(\f@size pt)}{0}\selectfont{\raisebox{-0.52ex}[0ex][-0.52ex]{$\circ$}}}}
\providecommand{\ucirc}{%
    \hbox{\fontfamily{\rmdefault}\fontsize{0.4\dimexpr(\f@size pt)}{0}\selectfont{\raisebox{0.0ex}[0ex][-0.52ex]{$\circ$}}}}
\mathchardef\mhyphen="2D
\providecommand{\idx}[4]{\tensor*[_{#3}^{#2}]{#1}{_{#4}}}
\providecommand{\Order}{\mathbf{O}} 
\providecommand{\logG}{\log_\grpG}
\providecommand{\expG}{\exp_\grpG}
\begin{document}

\title{Geometric Data Fusion for Collaborative Attitude Estimation}

\headertitle{
    Geometric Data Fusion for Collaborative Attitude Estimation
    }

    
    
    \author{
Yixiao Ge
\\
    Systems Theory and Robotics Group \\
    School of Engineering \\
	Australian National University \\
    Canberra, Australia \\
    \texttt{Yixiao.Ge@anu.edu.au} \\
    \And    
Behzad Zamani
\\
    Department of Electrical and Electronic Engineering \\
    University of Melbourne \\
    Melbourne, Australia \\
    \texttt{Behzad.Zamani@unimelb.edu.au} \\
\And    
Pieter van Goor
\\
    Robotics and Mechatronics (RaM) Group \\
    EEMCS Faculty\\
    University of Twente \\
    Enschede, The Netherlands \\
    \texttt{p.c.h.vangoor@utwente.nl} \\
	\And	
    Jochen Trumpf
\\
    School of Engineering \\
	Australian National University \\
    Canberra, Australia \\
	\texttt{Jochen.Trumpf@anu.edu.au} \\
    \And	
    Robert Mahony
\\
    Systems Theory and Robotics Group \\
    School of Engineering \\
	Australian National University \\
    Canberra, Australia \\
	\texttt{Robert.Mahony@anu.edu.au} \\
}

\maketitle



\begin{abstract}                
    In this paper, we consider the collaborative attitude estimation problem for a multi-agent system.
    The agents are equipped with sensors that provide directional measurements and relative attitude measurements.
    We present a bottom-up approach where each agent runs an extended Kalman filter (EKF) locally using directional measurements and augments this with relative attitude measurements provided by neighbouring agents. 
    The covariance estimates of the relative attitude measurements are geometrically corrected to compensate for relative attitude between the agent that makes the measurement and the agent that uses the measurement before being fused with the local estimate using the convex combination ellipsoid (CCE) method to avoid data incest. 
    Simulations are undertaken to numerically evaluate the performance of the proposed algorithm. 
\end{abstract}


\section{Introduction}
The problem of collaborative state estimation over sensor networks has drawn significant attention in the past 20 years \cite{roumeliotis2002distributed}.
In this problem, different agents share measurements and state-estimates to improve overall state estimation.
Sharing data in this way, however, introduces the possibility of data incest \cite{julier1997non}.
To see this, consider a network of individual estimators each estimating some states while communicating with other nodes on the network.
Information received from other agents will depend on information transmitted by the agent itself in preceding communications, potentially reinforcing its own hypothesis and increasing the risk of overconfidence in the resulting state estimates \cite{julier1997non}.

To overcome these challenges there are two main approaches: a top down approach where the full state estimation is formulated as a joint estimation problem and then the computation is decentralised to each node (\cite{carrillo2013decentralized, luft2018recursive,zhu2020fully}), and the bottom up approach, where each agents runs an independent estimator locally and fuses data from other agents taking precautions to avoid data incest \cite{olfati2005distributed,olfati2006distributed}.
The key enabling step in the bottom up approach is a methodology to provide safe fusion of correlated data into a local agent state estimation such that it retains the common uncertainty of the original random variables.
This problem has been studied since the 60s \cite{schweppe1968recursive,kahan1968circumscribing, bertsekas1971recursive}.
In more recent work (\cite{julier1997non}, \cite{julier2017general}) Julier and Uhlmann proposed the Covariance Intersection (CI) algorithm which restricts the fusion problem to a family of convex combinations of the inverse covariance matrices and is the most commonly used data fusion method in multi-agent problems.
The CI algorithm, however, is known to be too conservative in certain situations \cite{julier2007using}.
The Inverse Covariance Intersection (ICI) method \cite{noack2017decentralized} computes the maximum possible common information shared by the estimates to be fused, and is known to be less conservative than the CI method.
An alternative solution is the Convex Combination Ellipsoid (CCE) fusion method which arises from the set-theoretic fusion technique \cite{schweppe1968recursive}.
The CCE method shares a similar structure with CI, however it improves the tightness of the fusion result while avoiding unnecessary uncertainties as the byproduct of the fusion process \cite{zamani2023collaborative}.

All these fusion algorithms are originally formulated in global Euclidean space, and there have been many attempts to adapt these classical methods to systems that live on smooth manifolds, particularly Lie groups.
One popular approach is to consider the fusion problem as finding the optimal mode of the posterior distribution by solving an optimization problem \cite{bourmaud2016intrinsic}.
In \cite{wolfe2011bayesian}, the authors solved the fusion problem by posing a set of algebraic equations using the Baker-Campbell-Hausdorff formula.
Recent work in equivariant filter theory \cite{mahony2020equivariant} and geometric extended Kalman filtering \cite{mueller2017covariance,ge2023note} has provided a strong understanding of the geometry of filtering and data fusion.
In particular, these works demonstrate that it matters in which coordinates the generative noise model for a measurement is posed and provides formulae and methodology to transform covariance from one set of coordinates to another \cite{ge2023note}.



In this paper, we consider a bottom up approach to the problem of collaborative attitude estimation, where each node estimates its own attitude as well as taking relative measurements of other nodes.
The problem is posed on the Lie group $\SO(3)$ representing the attitude of a single agent of interest, termed the \emph{ego-agent}. 
The information used are local directional measurements, angular velocity, and a noisy relative attitude measurement of the ego-agent as observed by a neighbouring \emph{altruistic-agent} along with the altruistic agent's own state estimate (estimated attitude and its error covariance).
This relative attitude measurement, combined with the altruistic agent's state estimate, is effectively an attitude measurement of the ego-agent, and can be fused into the ego-agent's state estimation, at the appropriate point, in the filter algorithm.
However, in a collaborative estimation scenario, the altruistic agent's state estimate is itself dependent on shared information from the ego-agent, and this relative pose measurement should not be treated as an independent measurement.
Furthermore, the attitude measurement is observed from the perspective of the altruistic agent and is written in these coordinates.
The covariance of the measurement must be transformed into the ego-state coordinates to avoid incorrect inference.
The contribution of the paper is to combine the geometric modification of the covariance into the correct coordinates with the CCE fusion method to obtain a high-performance bottom up collaborative state estimation scheme for multi-agent attitude estimation.
We provide a Monte-Carlo simulation to demonstrate the performance of the proposed estimation algorithm together with the geometric modifications.


\section{Preliminaries}




\subsection{Special orthogonal group $\SO(3)$}
Attitude of an agent is represented as a rotation matrix $R$ in the special orthogonal group $R \in \SO(3)$. 
The identity element of $\SO(3)$, denoted $\id$, is the identity matrix. 
Given arbitrary $X,Y\in\SO(3)$, the left and right translations are denoted by $\textrm{L}_X(Y) :=XY$ and $\textrm{R}_X(Y):=YX$.
The Lie algebra $\so(3)$ of $\SO(3)$ consists of all skew-symmetric matrices of the form
\[
u^\wedge=\left(\begin{array}{ccc}
0 & -u_3 & u_2 \\
u_3 & 0 & -u_1 \\
-u_2 & u_1 & 0
\end{array}\right),
\]
and is isomorphic to the vector space $\R^3$. 
We use the wedge $(\cdot)^\wedge:\R^3\rightarrow\so(3)$ and vee $(\cdot)^\vee:\so(3)\rightarrow\R^3$ operators to map between the Lie algebra and vector space.
The Adjoint map for the group $\SO(3)$, $\Ad_X:{\so(3)}\to{\so(3)}$ is defined by
\[
    \Ad_{X}[{{u}^{\wedge}}] = X u^\wedge X^\top  ,
\]
for every $X \in \SO(3)$ and ${{u}^{\wedge} \in \so(3)}$.
Given particular wedge and vee maps, the Adjoint matrix is defined as the map ${\Ad_{X}^\vee: {\R^{3}}\to{\R^{3}}}$
\begin{equation*}
    \Ad_{X}^\vee u = \left(\Ad_{X}{{u}^{\wedge}}\right)^{\vee} .
\end{equation*}
The adjoint map for the Lie algebra $\ad_{{u}^\wedge}: {\so(3)}\to{\so(3)}$ is given by
\begin{equation*}
    \ad_{{u}^\wedge}{{v}^{\wedge}} = \left[{u}^{\wedge}, {v}^{\wedge}\right].
\end{equation*}
We define the adjoint matrix $\ad^\vee_u:\R^3\to\R^3$ to be:
\begin{align*}
    \ad^\vee_u v=\left[{u}^{\wedge}, {v}^{\wedge}\right]^\vee.
\end{align*}

Let $\expG:\so(3)\to\SO(3)$ denote the matrix exponential (G denotes the SO(3) group).
In order to improve the analogy to the fusion literature that is usually written in $\R^n$ coordinates we will use the $\boxplus$ (`boxplus') operator for the exponential map
\[
X\boxplus u = X\expG(u^\wedge),
\]
for $X\in\SO(3)$ to represent the state and $u\in\mathbb{R}^3$ to represent a certain noise process. 
Let $\SO^\circ(3)\subset\SO(3)$ be the subset of $\SO(3)$ where the exponential map is invertible and note that $\SO^\circ(3)$ is almost all of $\SO(3)$, excluding only those points with a rotation of $\pi$ radians. 
The logarithm map $\logG:\SO^\circ(3)\to\so(3)$ and $\logG^\vee:\SO^\circ(3)\to\R^3$ is well defined on $\SO^\circ(3)$. 

The Jacobian $\textbf{J}_{u^\wedge}:\so(3)\to\so(3)$ is defined to be the left-trivialised directional derivative of $\expG : \so(3) \to \SO(3)$ at a point $u^\wedge\in\so(3)$ on $\SO(3)$.
Given an arbitrary $w^\wedge\in\so(3)$, it satisfies
\begin{align*}
    \textbf{J}_{u^\wedge}[w^\wedge] &= \tD\tL_{\expG(-u^\wedge)}\cdot\tD \exp_{\grpG}(u^\wedge)[w^\wedge],
\end{align*}
where the tangent space $\tT_{\exp_{\grpG}(u^\wedge)} \SO(3)$ is isomorphic to $\so(3)$ via left trivialisation.
Equivalently, $\tD \exp_{\grpG} (u^\wedge) [w^\wedge] =  \tD\tL_{\exp_{\grpG}(u^\wedge)}\textbf{J}_{u^\wedge} [w^\wedge] \in \tT_{\exp_{\grpG}(u^\wedge)} \SO(3)$. 
Its matrix form, denoted by $J_u\in\R^{3\times 3}$, is given by \cite{Chirikjian_2011}
\begin{align*}
J_u &:=I_3-\frac{1-\cos \|u\|}{\|u\|^2} u^\wedge+\frac{\|u\|-\sin \|u\|}{\|u\|^3} {u^\wedge}^2.
\end{align*}
For an arbitrary $u^\wedge\in\so(3)$, the inverse of its Jacobian is given by 
\begin{align*}
    J_u^{-1}:=I_3+\frac{1}{2} u^\wedge+\left(\frac{1}{\|u\|^2}-\frac{1+\cos\|u\|}{2\|u\|\sin\|u\|}\right){u^\wedge}^2.
\end{align*}


\subsection{Concentrated Gaussians on $\SO(3)$}
We use the concept of a concentrated Gaussian to model distributions on $\SO(3)$.
For a random variable $X\in\SO(3)$, the probability density function is defined as
\begin{align}\label{eq:CGD}
    p(X;\hat{X},\mu,\Sigma) = \alpha e^{-\frac{1}{2}(\logG^\vee(\hat{X}^{-1}X)-\mu)^\top\Sigma^{-1}(\logG^\vee(\hat{X}^{-1}X)-\mu)},
\end{align}
where $\alpha$ is a normalizing factor.
The stochastic parameters are $\mu\in\R^3$, a mean vector in local coordinates, and $\Sigma\in\mathbb{S}_+(3)$ a positive-definite symmetric $3\times 3$ covariance matrix parameter.
The geometric parameter $\hat{X}\in\SO(3)$ is termed the reference point and plays the role of the origin of the local coordinates.
We will term a concentrated Gaussian \emph{zero-mean} if $\mu \equiv 0$. 
In this case the distribution corresponds to the classical concentrated Gaussian \cite{chirikjian2014gaussian,bourmaud2016intrinsic} where one can think of the reference point $\hat{X} \in \SO(3)$ as a sort of `geometric' mean. 
We will write $X\sim\GP_{\hat{X}}(\mu,\Sigma)$ for the random variable $X \in \SO(3)$. 

\begin{lemma}\label{lemma:coordinate_change}
    Given an arbitrary concentrated Gaussian distribution $p(X) = \GP_{X_1}(\mu_1,\Sigma_1)$ on $\SO(3)$, then the zero-mean concentrated Gaussian distribution $q(X) = \GP_{X_2}(0,\Sigma_2)$ with parameters
    \begin{align*}
        &X_2 = X_1\expG(\mu_1)\\
        &\Sigma_2 = J_{\mu_1}\Sigma_1 J_{\mu_1}^\top
    \end{align*}
    minimises the Kullback-Leibler divergence $p(X)$ with respect to $q(X)$ up to second-order linearisation error.
\end{lemma}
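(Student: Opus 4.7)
The plan is to reduce the KL-minimisation over concentrated Gaussians to a classical Euclidean Gaussian moment-matching problem by pushing both densities through the logarithmic chart and tracking only the contributions that survive at second order in the concentration radius.

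First, I would parameterise $X$ under $p$ by the random vector $\xi = \logG^\vee(X_1^{-1}X)$, which by the definition (\ref{eq:CGD}) has distribution close to $\mathcal{N}(\mu_1,\Sigma_1)$ (the Jacobian of the log map contributes only beyond second order in the concentrated regime). Similarly, under any candidate $q(X)=\GP_{X_2}(0,\Sigma_2)$, the variable $\eta = \logG^\vee(X_2^{-1}X)$ is approximately $\mathcal{N}(0,\Sigma_2)$. The natural candidate for the reference point is $X_2 = X_1\expG(\mu_1^\wedge)$, since this is the unique translate of $X_1$ at which the first-order mean in the new chart vanishes---any other choice would yield a nonzero projected mean and thereby incur additional strictly positive quadratic KL cost.

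Second, I would compute the deterministic change of variables between $\xi$ and $\eta$. Writing $\xi = \mu_1+\delta$ with $\delta$ small, I would invoke the left-trivialised derivative of the exponential recalled in the preliminaries,
\[
\expG((\mu_1+\delta)^\wedge)=\expG(\mu_1^\wedge)\,\expG\bigl((J_{\mu_1}\delta)^\wedge\bigr)+O(\|\delta\|^2).
\]
After cancelling $\expG(-\mu_1^\wedge)\expG(\mu_1^\wedge)=\id$ in $\eta = \logG^\vee(\expG(-\mu_1^\wedge)\expG(\xi^\wedge))$, this yields $\eta = J_{\mu_1}\delta + O(\|\delta\|^2)$. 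Since $\delta\sim\mathcal{N}(0,\Sigma_1)$, the pushforward of $p$ into the $\eta$-chart is $\mathcal{N}(0,J_{\mu_1}\Sigma_1 J_{\mu_1}^\top)$ to the stated order.

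With both densities now represented in a common Euclidean chart as approximate Gaussians, the minimisation of $D_{\mathrm{KL}}(p\,\|\,q)$ over the family $\mathcal{N}(0,\Sigma_2)$ collapses to a standard forward-KL projection, whose unique minimiser is obtained by moment matching: $\Sigma_2 = J_{\mu_1}\Sigma_1 J_{\mu_1}^\top$. Combined with the earlier choice of $X_2$, this gives the formulae in the statement.

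The main obstacle is bookkeeping of approximation orders: one must verify that the Jacobian determinants of the two log-charts, the higher-order BCH remainders relating $\xi$ and $\eta$, and the normalising constants $\alpha$ in both densities all contribute only beyond the second-order truncation, so that the reduction to Euclidean Gaussian moment matching is faithful at the claimed order. Once this accounting is in place, the optimality of $(X_2,\Sigma_2)$ follows immediately from the textbook forward-KL projection onto the Gaussian family.
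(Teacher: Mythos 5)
Your proposal is correct and follows essentially the same route as the paper: both arguments hinge on the first-order identity $\logG^\vee\bigl(\expG(-\mu_1)X_1^{-1}X\bigr) \approx J_{\mu_1}\bigl(\logG^\vee(X_1^{-1}X)-\mu_1\bigr)$ obtained from the left-trivialised derivative of $\expG$, combined with the fact that the forward-KL-optimal Gaussian matches the second moment of the pushed-forward distribution. The only cosmetic difference is that the paper derives the moment-matching condition by explicitly differentiating the KL divergence with respect to $\Sigma_2$ and setting the critical point, whereas you invoke it as the standard forward-KL projection onto the Gaussian family (and you additionally sketch why $X_2 = X_1\expG(\mu_1)$ is the right reference point, which the paper takes as given).
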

\begin{proof}
The Kullback-Leibler divergence between $p(X)$ and $q(X)$ is given by
\begin{align*}
        {\text{KL}}(p||q)&= \E_p[\log(p)-\log(q)]\\
            & = C_p +\frac{n}{2}\log(2\pi)+ \frac{1}{2}\log\det(\Sigma_2)\\
            &\qquad\qquad+\frac{1}{2}\E_p[\logG^\vee(X_2^{-1}X)^\top \Sigma_2^{-1}\logG^\vee(X_2^{-1}X)],
    \end{align*}
    where $C_p$ is the negative entropy of $p(X)$.
    Taking the derivative of $ {\text{KL}}(p||q)$ with respect to $\Sigma_2$ in the direction $u$ yields
    \begin{align*}
        \tD&_{\Sigma_2}\text{KL}(p(X)||q(X))[u] = \\
        &\frac{1}{2}\tr\big(\Sigma_2^{-1}u-
        \Sigma_2^{-1} \E_p[\logG^\vee(X_2^{-1}X)\logG^\vee(X_2^{-1}X)^\top] \Sigma_2^{-1}u\big).
    \end{align*}
    The critical point is given by
     \begin{align}\label{eq:R_prime_expect}
        \Sigma_2 = \E_p[\logG^\vee(X_2^{-1}X)\logG^\vee(X_2^{-1}X)^\top].
    \end{align}

    Defining $\phi_1:\R^3\to\R^3$ and $\phi_2:\R^3\to\R^3$ as
    \begin{align*}
        &\phi_1(X): = \logG^\vee(X_1^{-1}X)-\mu_1,\\
        &\phi_2(X): = \logG^\vee(\expG(-\mu_1)X_1^{-1}X),
    \end{align*}
    one has
    \begin{align}\label{eq:transfer_function}
        \phi_2(X) = \logG^\vee(\expG(\mu_1)^{-1}\expG(\phi_1(X)+\mu_1)).
    \end{align}
    Taking the Taylor series of \eqref{eq:transfer_function} at $\phi_1(X) = 0$ up to first order yields:
    \begin{align*}
        \phi_2(X) &\approx \tD\logG^\vee(\id)\circ\tD \textrm{L}_{\expG(\mu_1)^{-1}}\circ\tD \expG(\mu_1) [\phi_1(X)]\\
        &=J_{\mu_1}\phi_1(X).
    \end{align*}
    Substitute the result into \eqref{eq:R_prime_expect}:
    \begin{align*}
        \Sigma_2 &= \E_p[\phi_2(X)\phi_2(X)^\top]\\
        &\approx\E_p[J_{\mu_1}\phi_1(X)(J_{\mu_1}\phi_1(X))^\top]\\
        &=J_{\mu_1}\E_p[\phi_1(X)\phi_1(X)^\top]J_{\mu_1}^\top\\
        &=J_{\mu_1}\Sigma_1J_{\mu_1}^\top,
    \end{align*}
    where the last equality follows from the definition of $\Sigma_1 =\E_p[\phi_1(X)\phi_1(X)^\top] $.
\end{proof}

\begin{corollary}\label{corollary:inverse}
    Given an arbitrary zero-mean concentrated Gaussian distribution $p(X) = \GP_{X_1}(0,\Sigma_1)$, choose and fix $X_2\in\SO(3)$, then the concentrated Gaussian $q(X) = \GP_{X_2}(\mu_2,\Sigma_2)$ for parameters
    \begin{align*}
        &\mu_2 = \logG^\vee(X_2^{-1}X_1)\\
        &\Sigma_2 = J_{\mu_2}^{-1}\Sigma_1J_{\mu_2}^{-\top}
    \end{align*}
    minimises the Kullback-Leibler divergence with $p(X)$ up to second-order linearisation error.
\end{corollary}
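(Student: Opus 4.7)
The plan is to mirror the proof of Lemma~\ref{lemma:coordinate_change}, but now treating the zero-mean concentrated Gaussian as the fixed target and varying the parameters $(\mu_2,\Sigma_2)$ of the general concentrated Gaussian over which KL is minimised. Writing $\psi(X) := \logG^\vee(X_2^{-1}X)$, the only part of $\text{KL}(p\|q)$ depending on $(\mu_2,\Sigma_2)$ is
\begin{align*}
    \tfrac{1}{2}\log\det(\Sigma_2) + \tfrac{1}{2}\E_p\!\left[(\psi(X)-\mu_2)^\top \Sigma_2^{-1}(\psi(X)-\mu_2)\right].
\end{align*}
Differentiating with respect to $\mu_2$ and $\Sigma_2$ independently yields the standard critical-point conditions
\begin{align*}
    \mu_2 = \E_p[\psi(X)], \qquad \Sigma_2 = \E_p\!\left[(\psi(X)-\mu_2)(\psi(X)-\mu_2)^\top\right].
\end{align*}

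Next, I would evaluate these expectations by parameterising a sample as $X = X_1\expG(\epsilon^\wedge)$ with $\epsilon\sim\calN(0,\Sigma_1)$ to leading order and setting $\bar\mu := \logG^\vee(X_2^{-1}X_1)$, so that $\psi(X) = \logG^\vee\!\left(\expG(\bar\mu^\wedge)\expG(\epsilon^\wedge)\right)$. The crux of the calculation is the first-order identity
\begin{align*}
    \logG^\vee\!\left(\expG(\bar\mu^\wedge)\expG(\epsilon^\wedge)\right) = \bar\mu + J_{\bar\mu}^{-1}\epsilon + O(\|\epsilon\|^2),
\end{align*}
which is obtained by inverting the relation $\expG((\bar\mu + \delta)^\wedge) = \expG(\bar\mu^\wedge)\expG((J_{\bar\mu}\delta)^\wedge) + O(\|\delta\|^2)$ implicit in the proof of Lemma~\ref{lemma:coordinate_change}. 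Substituting and using $\E[\epsilon]=0$ and $\E[\epsilon\epsilon^\top]=\Sigma_1$ gives $\mu_2 \approx \bar\mu = \logG^\vee(X_2^{-1}X_1)$ and $\Sigma_2 \approx J_{\mu_2}^{-1}\Sigma_1 J_{\mu_2}^{-\top}$, matching the claimed formulae.

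The main obstacle, and the one that distinguishes this from the previous lemma, is correctly tracking the direction of the Jacobian factor. Because the perturbation $\epsilon$ enters as a right-multiplied exponential rather than an additive shift inside an exponent, the appropriate Jacobian is $J_{\bar\mu}^{-1}$ rather than $J_{\bar\mu}$; conflating the two would produce the covariance $J_{\mu_2}\Sigma_1 J_{\mu_2}^\top$, which is wrong. A cleaner way to organise the argument is to observe that the formula derived in Lemma~\ref{lemma:coordinate_change} exhibits a bijection on $(\text{reference point},\text{mean},\text{covariance})$ triples valid to second order, and that the corollary is precisely its inverse; carrying out the direct KL computation above serves to verify that this inverse really does minimise KL and is not merely a consistent reparameterisation.
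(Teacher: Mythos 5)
Your proposal is correct and is essentially the approach the paper intends: the paper's entire proof is the single remark that the corollary ``follows directly from Lemma~\ref{lemma:coordinate_change},'' which is precisely the inverse-bijection observation in your closing paragraph, and your direct KL computation is the lemma's own argument rerun with the roles of the zero-mean and general distributions exchanged. Your explicit verification of the first-order identity $\logG^\vee(\expG(\bar\mu^\wedge)\expG(\epsilon^\wedge)) = \bar\mu + J_{\bar\mu}^{-1}\epsilon + O(\|\epsilon\|^2)$, and the resulting $J_{\mu_2}^{-1}$ rather than $J_{\mu_2}$, correctly supplies the detail the paper leaves implicit.
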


Corollary \ref{corollary:inverse} follows directly from Lemma \ref{lemma:coordinate_change}.

\section{Problem Formulation}
The problem we target is to design a fully distributed algorithm to estimate the absolute attitude of individual agents collaboratively in a multi-agent system.
Each agent is equipped with an onboard inertial measurement unit (IMU) that provides bias-free angular velocity $\omega\in\R^3$ in the body frame.
With non-rotating, flat earth assumption, the deterministic system kinematics are given by
\begin{align}\label{eq:attitude}
   R_i(t+1) = R_i(t) \exp_\grpG \left( \Delta t\;\omega_i(t)^{\wedge}\right).
\end{align}
Note that the subscripted index $i$ refers to the $i^\text{th}$ agent and the index $t$ refers to the time step. 
We will drop the time step notation where it is clear in order to simplify notation. 

The agent $i$ also has extrinsic sensors such as a magnetometer that measures known directions (magnetic field) in the body-frame.
The $\ell$th direction measurement $\idx{z}{i}{}{\ell}$ for agent $i$ is given by 
\begin{align}\label{eq:output}
\idx{z}{i}{}{\ell} = R_i^\top d_\ell
   \end{align}
where $d_\ell$ for $\ell = 1, \ldots, n$ are a collection of known reference directions. 
Using these measurements, the agent can run a filter-based algorithm locally to estimate its own state and the associated covariance \cite{markley2014fundamentals}\cite{fornasier2022overcoming}.

In addition, each agent can communicate with agents in its neighbourhood and has a sensor capable of measuring relative attitude of its neighbouring agents.
If two agents $i$ and $j$ have states $R_i, R_j \in \SO(3)$ then the relative state $\idx{R}{j}{}{i}$ of $j$ with respect to $i$ is defined to be
\[
\idx{R}{j}{}{i} := R_j^{-1} R_i \in \SO(3).
\]
The relative state can be thought of as the attitude of agent $i$ expressed as perceived by a sensor on agent $j$. 
Alternatively, consider left translation of the whole space by  $\tL_{R_j^{-1}}$.
Then $\id = \tL_{R_j^{-1}} R_j = R_j^{-1} R_j = \Id_3$ and $\idx{R}{j}{}{i} = \tL_{R_j^{-1}} R_i = R_j^{-1} R_i$.
That is, the relative state is the coordinates of agent $i$ with respect to a new group parametrisation that places agent $j$ at the identity attitude. 

A relative state sensor on agent $j$ may directly measure the physical relative attitude of agent $i$ or may measure the relative angular difference between the states. 
Physical attitude measurements are associated with directly measuring the direction cosines that make up the entries of the matrix $\idx{R}{j}{}{i} \in \SO(3)$. 
Such measurements are typically inner products like $\idx{z}{i}{}{\ell}^\top d_\ell$ that correspond to cosines of angles between known or measured vectors.
For a physical relative attitude measurement, an appropriate model is 
\begin{align}\label{eq:measurement_1}
\idx{y}{j}{}{i} = R_j^{-1}R_i\boxplus\kappa_j,\quad \kappa_j\sim \GP(0,Q_j).
\end{align}
That is, the generative measurement noise model is a zero-mean concentrated Gaussian process $\idx{y}{j}{}{i} \sim \GP_{\idx{R}{j}{}{i}} (0,Q_j)$. 
Conversely, a relative angular sensor measures the underlying angle from one attitude to another. 
For example, if two attitudes are connected through a physical gimbal system then the sensor will measure Euler angles between the two states. 
Another example is when a vision system or similar system estimates the relative axis of rotation and relative angle from itself to another agent rather than the direction cosines \cite{chatterjiVisionBasedPositionAttitude1998}.  
The measurement in this case, which we denote by $z$ to distinguish it from \eqref{eq:measurement_1}, is appropriately modelled by 
\begin{align}\label{eq:measurementmodel_2}
    \idx{z}{j}{}{i} = \expG \left( \idx{\mu}{j}{}{i} + \kappa_j ^{\wedge}\right) ,\quad \kappa_j\sim\GP(0,Q_j)
\end{align}
where $\idx{\mu}{j}{}{i} = \logG(R_j^{-1}R_i) \in \so(3)$ is the angle-axis representation.
$\idx{\mu}{j}{}{i} = \theta a^\wedge$ for a rotation of $\theta$ rad around an axis $a \in \Sph^2$. 
In this case, the generative noise measurement model is a concentrated Gaussian $\idx{z}{j}{}{i} \sim \GP_{\id}(\idx{\mu}{j}{}{i},Q_j)$ with non-zero mean. 

\begin{figure}[htb!]
\centering
    \includegraphics[width=0.7\linewidth]{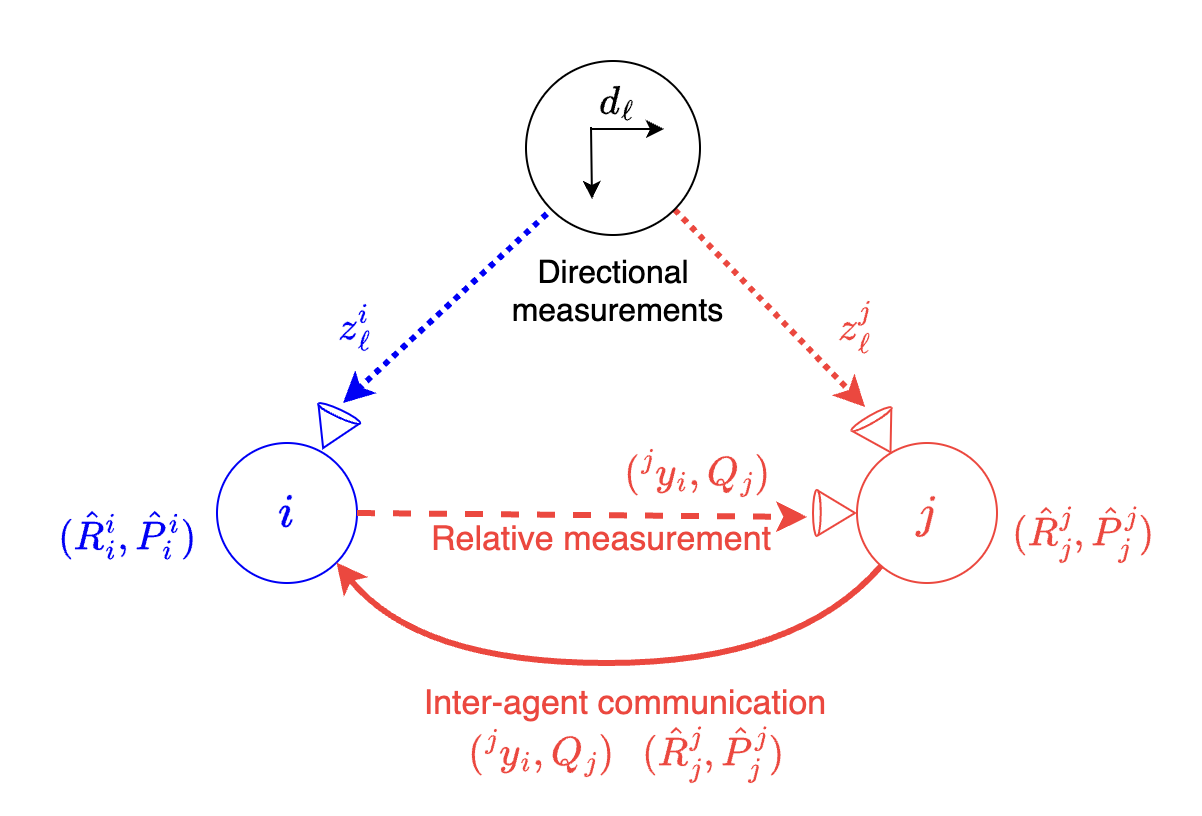}
    \caption{Illustration of the experimental setup to improve the estimate of agent $i$. The dotted lines represent each agent's measurements of known directions which are used to locally estimate their own states. The dashed line refers to the inter-agent measurement ${}^j y_i$ taken by agent $j$. The solid line represents the communication from agent $j$ to agent $i$.
    }
    \label{fig:diagram}
\end{figure}

\begin{figure}
    \centering
    \includegraphics[width=0.7\linewidth]{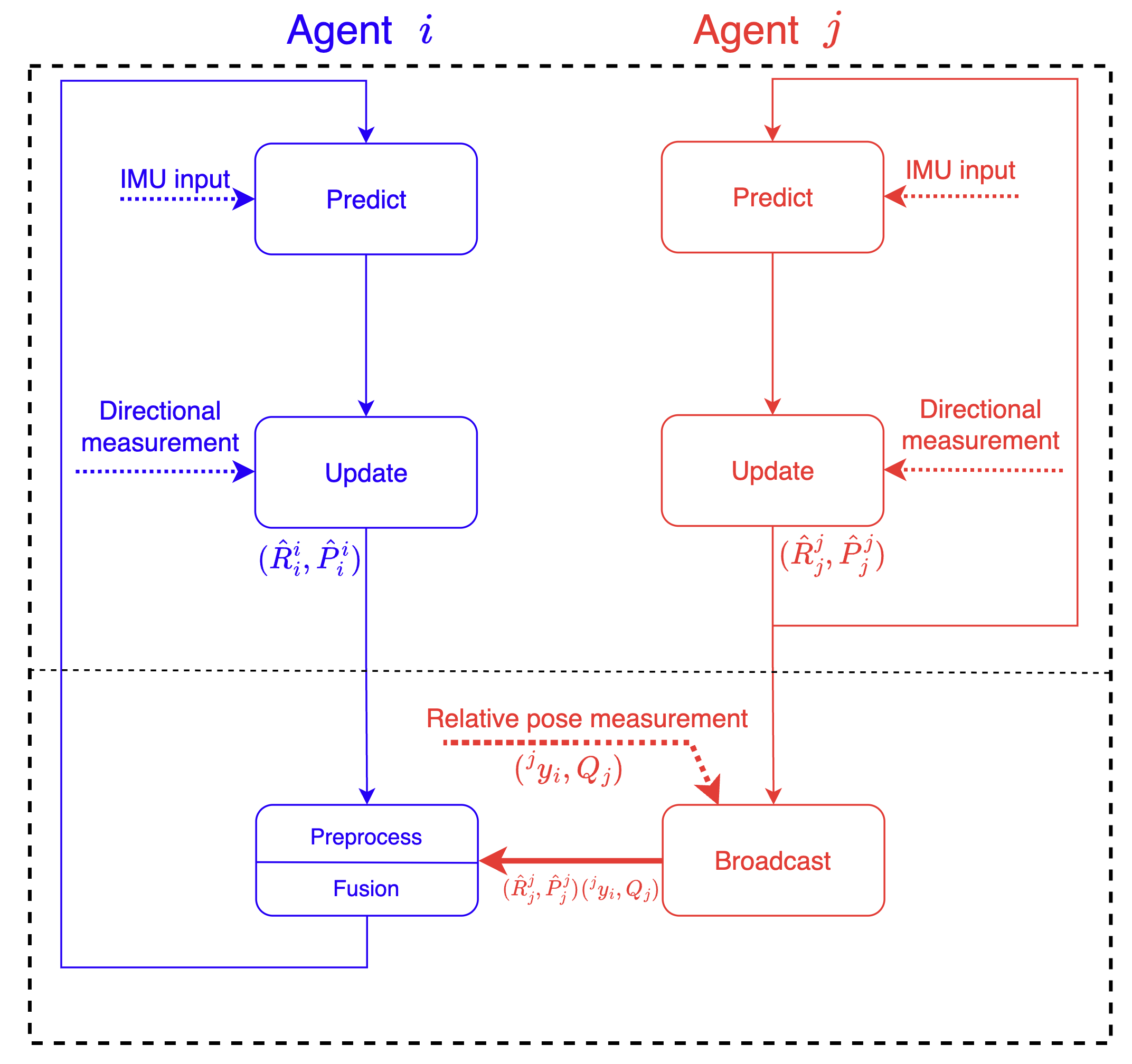}
    \caption{Implementation structure of the estimators and the information flow on each agent.}
    \label{fig:schematic}
\end{figure}

\section{Algorithm}\label{sec:algorithm}
We propose a filter-based algorithm to solve the attitude estimation problem.
Each agent estimates its own state $\hat{R}\in\SO(3)$ and the associated covariance $\hat{P}\in\mathbb{S}_+(3)$.
We use subscript and superscript for the agent that is being estimated and the agent that is making the estimation, respectively.
Fig~\ref{fig:schematic} demonstrates the overview of different steps in the proposed filter algorithm for an ego-agent (shown in blue) and an altruistic agent (shown in red).
In this section, we focus on the details of the filter running on the ego-agent $i$, which can be separated into three stages: predict (using IMU input), update (using directional measurement) 
and fusion (using relative measurements).

\subsection{Predict and update}
The filter follows the conventional EKF design methodology, including the predict and update step. 
The information state of the filter, which approximates the true distribution of the system state on $\SO(3)$, is a concentrated Gaussian distribution, given by
\begin{align*}
    R_i(t) \sim \GP_{\hat{R}_i^i(t)}(0,\hat{P}^i_i(t)).
\end{align*}

When an IMU input is available, it will be used to propagate the state estimate $\hat{R}_i^i$, which also acts as the reference point in the information state of the filter, using the full nonlinear model \eqref{eq:attitude}.
The covariance estimate $\hat{P}^i_i$ will be propagated by linearising the system model.
When the agent receives the directional measurement, under the assumption that every extrinsic measurement is independent, we can perform the standard Kalman fusion in logarithmic coordinates, followed by a covariance reset step which transforms the posterior into a zero-mean concentrated Gaussian (\cite{mueller2017covariance,ge2023note}).
For detailed implementation of the filter, see \cite{fornasier2022overcoming, ge2023note}.

\subsection{Fusion using relative measurements:}
\vspace{-2mm}
The core contribution of this work lies in the preprocessing and fusion steps shown in Figure~\ref{fig:schematic}.
When agent $j$ makes a relative attitude measurement $\idx{y}{j}{}{i}$ or $\idx{z}{j}{}{i}$ of agent $i$, it broadcasts the measurement and the associated noise covariance $Q_j$ of the measurement model, as well as agent $j$'s own state estimate $(\hat{R}^j_j, \hat{P}^j_j)$ to agent $i$.
As demonstrated in Fig \ref{fig:schematic}, it takes two steps to fuse the inter-agent information with agent $i$'s own estimate. 
Firstly in the preprocessing stage, the relative measurement is combined with both agent $i$ and agent $j$'s state estimate, which generates a new estimate of agent $i$, denoted by $(\hat{R}^j_i, \hat{P}^j_i)$.
We use the superscript $j$ to distinguish it from agent $i$'s estimate.
Then we implement the CCE fusion method to fuse $(\hat{R}^j_i, \hat{P}^j_i)$ with $(\hat{R}^i_i, \hat{P}^i_i)$.

\vspace{-1mm}
\subsubsection{Geometric transformation of the measurement model:}
We consider both measurement models in \eqref{eq:measurement_1} and \eqref{eq:measurementmodel_2}.
The second measurement model requires an extra step to transform it into a zero-mean concentrated Gaussian.
As given in \eqref{eq:measurementmodel_2}, the relative measurement can be expressed as a concentrated Gaussian with non-zero mean,
\[
    \idx{z}{j}{}{i}\sim\GP_\id(\logG(R_j^{-1}R_i), Q_j).
\]
By applying Lemma \ref{lemma:coordinate_change}, one gets the following approximation:
\[
\idx{y}{j}{}{i}\sim \GP_{R_j^{-1}R_i}(0, J_{\logG^\vee(R_j^{-1}R_i)}Q_j J_{\logG^\vee(R_j^{-1}R_i)}^\top).
\]
Such modification requires the knowledge of the noiseless configuration output $R_j^{-1}R_i$ which is not available in practice.
In this paper, we will assume the measurement noise is small and use the measurement $\idx{z}{j}{}{i}$ as a proxy for the relative state.
Alternative algorithms that exploit the two state estimates $\hat{R}^j_j$ and $\hat{R}^i_i$ directly are also possible. 
The measurement model can now be transformed as follow:
\[
\idx{y}{j}{}{i}\sim \GP_{R_j^{-1}R_i}(0, J_{\logG^\vee(\idx{z}{j}{}{i})} Q_j J_{\logG^\vee(\idx{z}{j}{}{i})}^\top),
\]
or equivalently,
\begin{align}\label{eq:measurement_j_new}
    \idx{y}{j}{}{i} \approx R_j^{-1}R_i\boxplus\kappa_j,\quad \kappa_j\sim\GP(0,Q^*_j),
\end{align}
with $Q^*_j = J_{\logG^\vee(\idx{y}{j}{}{i})} Q_j J_{\logG^\vee(\idx{y}{j}{}{i})}^\top$.

\subsubsection{Preprocessing relative measurements:}
Given the measurement model
\begin{align}\label{eq:measurement_j}
    \idx{y}{j}{}{i} = R_j^{-1}R_i\boxplus\kappa_j,\quad \kappa_j\sim\GP(0,Q_j)
\end{align}
and the error state models
\begin{align}\label{eq:errorstate_i}
    R_i = \hat{R}_i\boxplus\varepsilon_i,\quad \varepsilon_i\sim\GP(0,\hat{P}_i)
\end{align}
\begin{align}\label{eq:errorstate_j}
    R_j = \hat{R}_j\boxplus\varepsilon_j,\quad \varepsilon_j\sim\GP(0,\hat{P}_j)
\end{align}
substituting \eqref{eq:errorstate_j} into \eqref{eq:measurement_j} yields
\begin{align*}
    \idx{y}{j}{}{i} &= (\hat{R}_j\boxplus\varepsilon_j)^{-1}R_i\boxplus\kappa_j= \expG(-\varepsilon_j)\hat{R}_j^{-1}R_i\boxplus\kappa_j\\
            &= \hat{R}_j^{-1}R_i\boxplus\Ad_{(\hat{R}_j^{-1}R_i)^{-1}}(-\varepsilon_j)\boxplus\kappa_j.
\end{align*}
Replace $R_i$ using \eqref{eq:errorstate_i},
\begin{align*}
    \idx{y}{j}{}{i} &= \hat{R}_j^{-1}R_i\boxplus\Ad_{(\hat{R}_j^{-1}(\hat{R}_i\boxplus\varepsilon_i))^{-1}}(-\varepsilon_j)\boxplus\kappa_j\\
    &=\hat{R}_j^{-1}R_i\boxplus\Ad_{\expG(-\varepsilon_i)}\Ad_{(\hat{R}_j^{-1}\hat{R}_i)^{-1}}(-\varepsilon_j)\boxplus\kappa_j.
\end{align*}
Take the Taylor expansion of $\Ad^\vee_{\expG(-\varepsilon_i)}$, one has
\begin{align*}
    \Ad^\vee_{\expG(-\varepsilon_i)} =  I_3 - \ad^\vee_{\varepsilon_i} + \Order(\lvert \varepsilon_i \rvert^2).
\end{align*}
Assume that both $\varepsilon_i$ and $\varepsilon_j$ are small, then $\ad_{\varepsilon_i}(\varepsilon_j)$ and the higher-order terms can be approximated to be zero:
\begin{align}
    \idx{y}{j}{}{i} &= \hat{R}_j^{-1}R_i\boxplus\Ad_{(\hat{R}_j^{-1}\hat{R}_i)^{-1}}(-\varepsilon_j)\boxplus\kappa_j\nonumber\\
    &\approx \hat{R}_j^{-1}R_i\boxplus\kappa^+_j, \label{eq:new_meas}
\end{align}
where the new measurement noise $\kappa^+_j$ is given by
\begin{align*}
    \kappa^+_j \sim \GP(0,\Ad^\vee_{(\hat{R}_j^{-1}\hat{R}_i)^{-1}}\hat{P}^j_j{\Ad^\vee_{(\hat{R}_j^{-1}\hat{R}_i)^{-1}}}^\top + Q_j).
\end{align*}
One can now reconstruct a new estimate of $R_i$ from \eqref{eq:new_meas} by left multiplying by $\hat{R}_j$.
The new estimate is a zero mean Gaussian $\GP_{\hat{R}^j_i}(0, \hat{P}^j_i)$ where the parameters are given by
\begin{align*}
        \hat{R}^j_i &= \hat{R}_j\idx{y}{j}{}{i}\\
        \hat{P}^j_i &= \Ad^\vee_{(\hat{R}_j^{-1}\hat{R}_i)^{-1}}\hat{P}^j_j{\Ad^\vee_{(\hat{R}_j^{-1}\hat{R}_i)^{-1}}}^\top + Q_j.
\end{align*}

\vspace{-3mm}
\subsubsection{Geometric correction of distributions:}
The concentrated Gaussian distributions that are being fused can be written as $\GP_{\hat{R}^i_i}(0,\hat{P}^i_i)$ and $\GP_{\hat{R}^j_i}(0,\hat{P}^j_i)$.
Although the covariance $\hat{P}^i_i$ and $\hat{P}^j_i$ are both symmetric matrices, they are still associated with distributions expressed in different coordinates.
Fusing these covariance matrices directly, without correcting for the associated change of coordinates, will introduce artifacts and errors in the information state of the resulting filter, decreasing consistency and compromising performance of the algorithm. 

In order for the ego-agent to compensate for the change of coordinates in the measurement recieved from the altruistic agent, it must transform the measurement concentrated Gaussian into a concentrated Gaussian in its local coordinates. 
That is, it must solve for $\mu^{j*}_i$ and $\hat{P}^{j*}_i$ such that 
\begin{align*}
    \GP_{\hat{R}_i^j}(0,\hat{P}^j_i) \approx \GP_{\hat{R}^i_i}(\mu^{j*}_i,\hat{P}^{j*}_i).
\end{align*}
Applying Corollary \ref{corollary:inverse} then one has 
\begin{align*}
\mu^{j*}_i  :=\logG^\vee({\hat{R}^i_i}^{-1}\hat{R}_j\idx{y}{j}{}{i})\; , \qquad
\hat{P}^{j*}_i = J_{\mu^{j*}_i }^{-1} \hat{P}^j_i J_{\mu^{j*}_i }^{-\top}. 
\end{align*}

\vspace{-3mm}
\subsubsection{Data Fusion}
Now the targeting distributions are transported into the same coordinate, and the next step is to perform data fusion to the two distributions.
Rewrite the distributions into ellipsoidal sets on $\so(3)$, defined as $\mathcal{E}(0,\hat{P}^i_i) = \{u^\wedge\in\so(3) : \lVert u \rVert^2_{{\hat{P}^i_i}^{-1}}\leq 1\}$ and $\mathcal{E}(\mu^{j*}_i,\hat{P}^{j*}_i) = \{u^\wedge\in\so(3) : \lVert u - \mu^{j*}_i \rVert^2_{{\hat{P}^{j*}_i}^{-1}}\leq 1\}$.
Given these two prior ellipsoids have nonempty intersection, the convex combination $\mathcal{E}(\hat{u}^+, \hat{P}^+_i)$ of them is given by\small
\begin{align*}
& \hat{P}^{+}_i=k X, \;X=\left(\alpha {\hat{P}^i_i}^{-1}+(1-\alpha) {\hat{P}^{j*}_i}^{-1}\right)^{-1}, \\
& k=1-d^2, \;d^2=\left\|\mu^{j*}_i\right\|_{\left(\frac{\hat{P}_i^i}{\alpha}+\frac{\hat{P}^{j*}_i}{1-\alpha}\right)^{-1}}^2, \\
& \hat{u}^{+}=X\left((1-\alpha) {\hat{P}_i^{j*}}^{-1} \mu^{j*}_i\right) ,
\end{align*}
\normalsize where $\alpha\in[0,1]$ is a freely chosen gain in this paper. Alternatively, one can choose an optimal $\alpha^*$ such that $\alpha^*=\argmin_{\alpha}\det (\hat{P}_i^*)$.

Note that given the outputs of the CCE fusion method, the posterior is a concentrated Gaussian with non-zero mean, that is, $R_i\sim\GP_{\hat{R}_i}(\hat{u}^+, \hat{P}^+_i)$.
However, the next fusion iteration requires the distribution to have a zero mean, hence the goal of the reset step is to identify $\hat{P}^{\diamond}_i$ such that
\begin{align}
    R_i\sim\GP_{\hat{R}_i}(\hat{u}^+, \hat{P}^+_i)\approx\GP_{\hat{R}_i\expG{\hat{u}^+}}(0, \hat{P}^\diamond_i).
\end{align}
Similar to the coordinate change in the previous steps, this may be solved by using Lemma \ref{lemma:coordinate_change}.
The reset covariance $\hat{P}^\diamond_i$ is found to be
$
\hat{P}^\diamond_i = J_{\hat{u}^+}\hat{P}^+_i J_{\hat{u}^+}^\top.
$

Note that the reset step only modifies the covariance estimate and does not change the attitude estimate of agent $i$.

\section{Numerical experiment}
\vspace{-3mm}
In this section, we provide a numerical evaluation of the algorithm proposed in Section \ref{sec:algorithm}.
A Monte-Carlo simulation is undertaken to validate the performance of both the proposed geometric modifications and the fusion algorithm.
\vspace{-3mm}
\subsection{System implementation}
\vspace{-3mm}
In the Monte-Carlo setup we use the following randomisation and run 1000 experiments.
We simulate two independent oscillatory trajectories for agent $i$ and $j$, with the noise-less angular velocity generated by 
\begin{align*}
    \omega_i(\tau): &= (10 \lvert\sin(\tau)\rvert, \lvert\cos(\tau)\rvert, 0.1\lvert\sin(\tau)\rvert)\,\text{rad/s}\\
    \omega_j(\tau): &= (\lvert\sin(\tau)\rvert, 0.5\lvert\cos(\tau)\rvert, 5\lvert\sin(\tau)\rvert)\,\text{rad/s},
\end{align*}
and subsequently corrupted by additive Gaussian noise with covariance $\text{diag}(0.3^2,0.2^2,0.1^2)$.
The trajectory is realized using Euler integration \eqref{eq:attitude} and a time step $\Delta t=0.02s$.
The initial states estimates of the agents are offset from each other by 180 degrees with initial errors sampled from $\GP_{R(0)}(0, I_3)$.

The extrinsic sensor on agent $j$ measures two known directions
$
    d_1 = (0,1,0)\quad \text{and}\quad d_2 = (1,0,0),
$
with the output function \eqref{eq:output}, while the sensor on agent $i$ only measures the first direction.
The measurements for each direction are corrupted with additive Gaussian noise sampled from $\GP(0, \text{diag}(0.2^2, 0.1^2, 0.3^2))$.
In this experiment, we design the agent $i$ to use only the directional measurement of $d_1$, while agent $j$ has access to measurements of both directions.
In consequence, without the relative measurements from agent $j$, the state of the ego agent $i$ is unobservable --- a single directional measurement is insufficient to determine the full attitude of the vehicle.  
In this way, the experiment emphasises the role of the shared information in the filter and exacerbates errors due to information incest.
Both agents receive directional measurements at 20Hz.

Agent $j$ makes a relative attitude measurement of agent $i$ at 1Hz, which is corrupted by Gaussian noise $\GP(0,Q_j)$.
The non-homogeneous noise covariance is given by $Q_j = \text{diag}(0.5^2$, $0.3^2$,  $0.2^2)$.
We run separate simulations with both of the measurement noise models considered in \eqref{eq:measurement_1} and \eqref{eq:measurementmodel_2}.

For comparison, we implement two filters on agent $i$ using different measurements, aside from the proposed algorithm. 
One filter only uses the extrinsic directional measurements and disregards relative inter-agent measurements.
The second filter uses both extrinsic and relative measurements, however, it only does a naive fusion in the logarithmic coordinates, as in \cite{wolfe2011bayesian}.

\vspace{-1mm}
\subsection{Result}
\vspace{-1mm}
\begin{figure}[htb]
    \centering
    \includegraphics[width=0.9\linewidth]{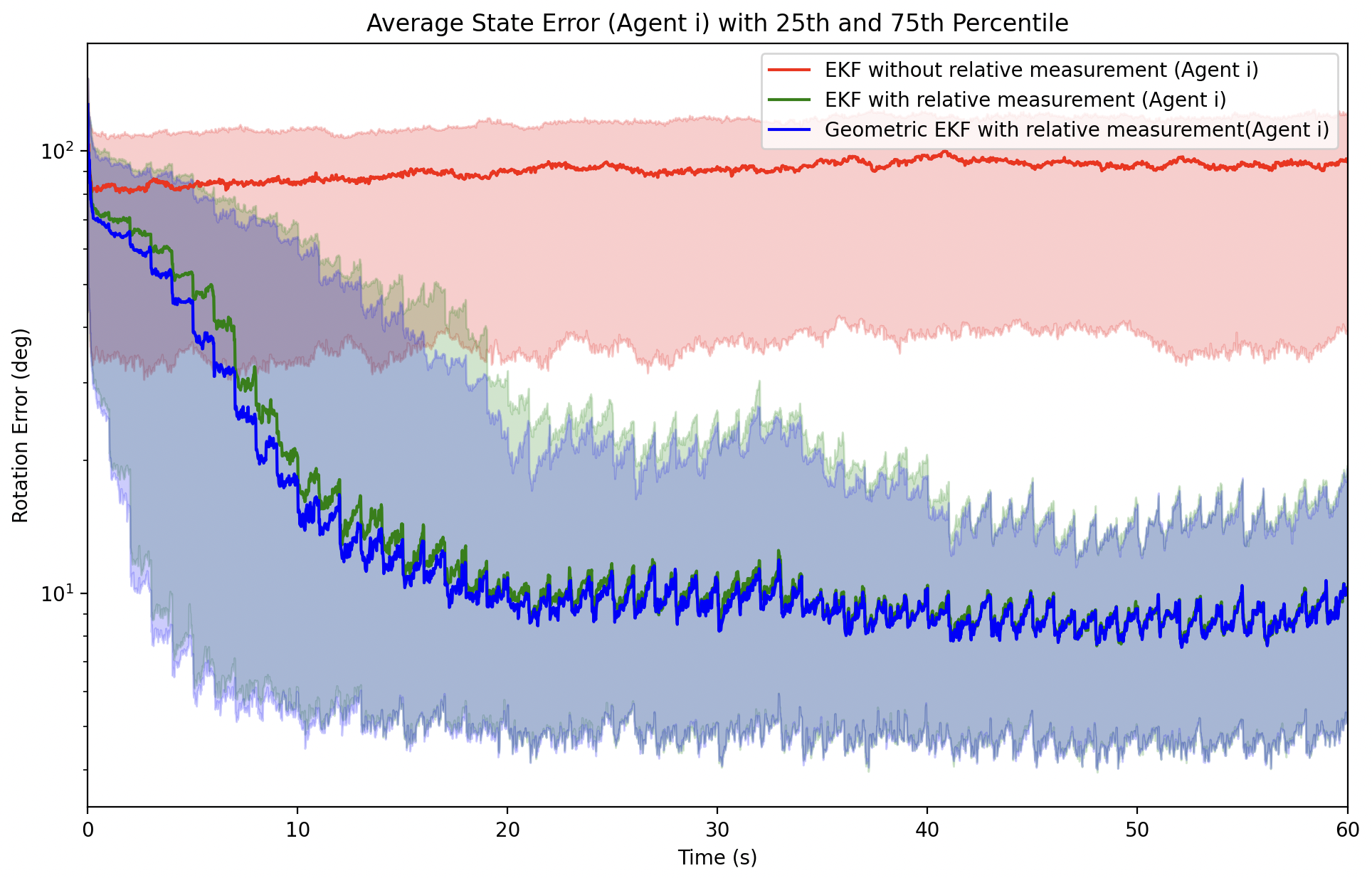}
    \vspace{-3mm}
    \caption{Direct physical measurement of relative state \eqref{eq:measurement_1}.  Mean rotation error ($e:=\arccos((\tr(R^{-1}\hat{R})-1)/2)$) with a shaded area representing the 25th and 75th percentiles.}
    \label{fig:result_model_1}
\end{figure}

Fig \ref{fig:result_model_1} and \ref{fig:result_model_2} demonstrate the performance of the proposed algorithm (in blue) compared with the EKF using only directional measurement (in red) and the EKF using a naive fusion scheme (in green).
Note that the noise parameters were chosen to demonstrate the relative advantages of the geometric modifications to be proposed.
That is, while we found that the proposed method outperformed the alternatives regardless of the noise model, the particular parameters used to generate the plots shown here were chosen to emphasise the performance advantage.
In particular, the measurement errors are larger than would be desirable in a real application, leading to relatively large attitude error in the plots.

\begin{figure}[htb]
    \centering
    \includegraphics[width=0.9\linewidth]{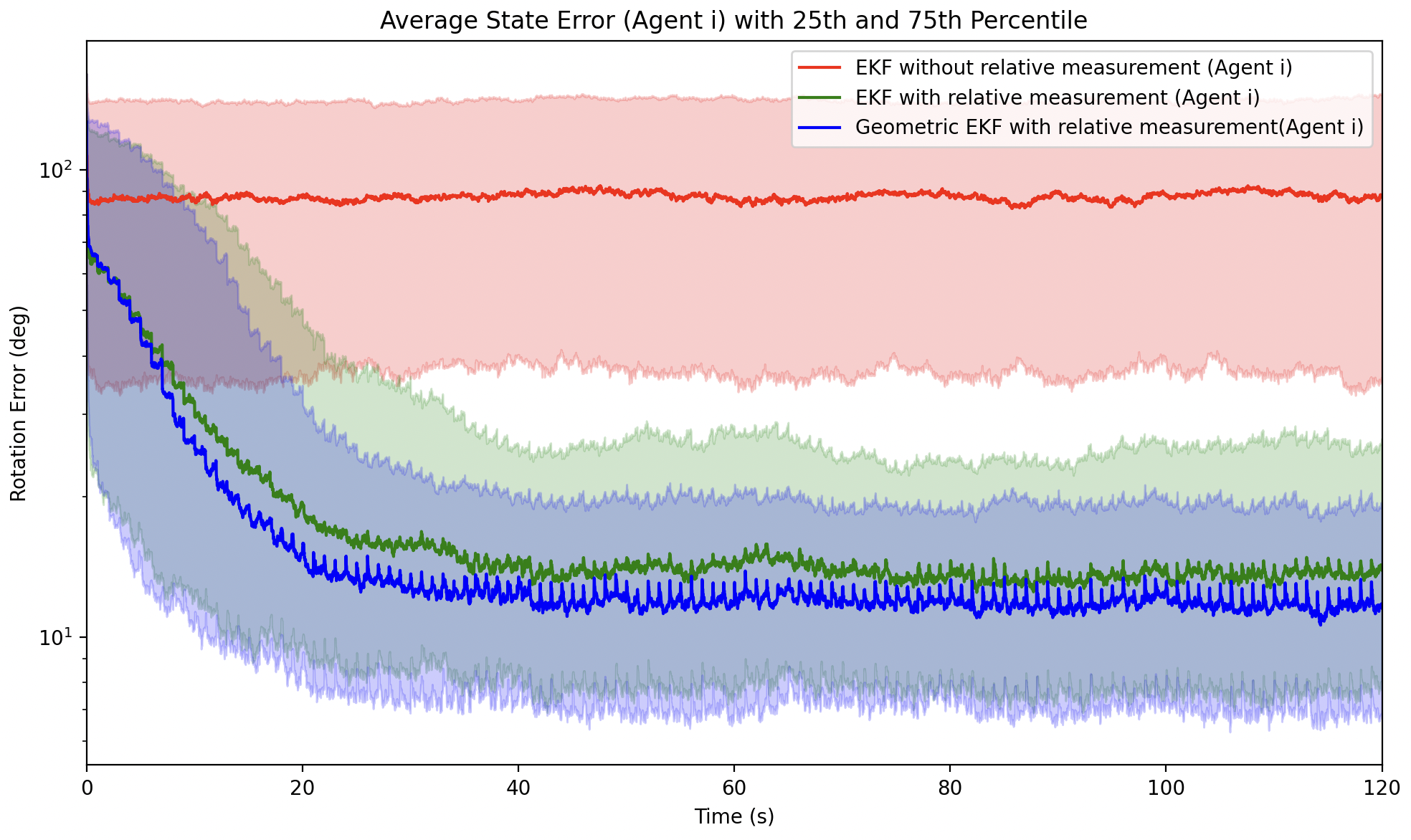}
    \vspace{-3mm}
        \caption{Relative angle measurement \eqref{eq:measurementmodel_2}.  
        }
    \label{fig:result_model_2}
\end{figure}

As expected, the local filter solution with only a single direction measurement available does not converge as the system is unobservable, as shown in both Fig \ref{fig:result_model_1} and \ref{fig:result_model_2}. 
Fig \ref{fig:result_model_1} shows the algorithm performance in the case where the direct physical measurement model \eqref{eq:measurement_1} is used.
It demonstrates an advantage in the proposed fusion algorithm during the transient of the algorithm but has similar asymptotic performance to the naive filter. 
This is to be expected, since the geometric modification in the data fusion step corrects for the difference between the filter estimate and relative state measurement coordinates.  
As the filter converges this difference becomes negligible and the benefit of the geometric correction is lost. 

This is not the case for measurement model \eqref{eq:measurementmodel_2}, as shown in Fig \ref{fig:result_model_2}, since the relative state measurement $\idx{z}{j}{}{i}$ is taken in coordinates around agent $j$. 
The relative state between the two agents $i$ and $j$ does not converge to the identity in general, and the geometric correction to compensate for the coordinate representation remains critical. 

\section{Conclusion}
In this paper, we propose a collaborative attitude estimation problem where agents run local filter-based algorithms which fuse the estimates and relative measurements communicated by neighboring agents.
We utilize the concept of concentrated Gaussians on $\SO(3)$ and exploit the geometric properties of the underlying space.
The proposed algorithm combines an EKF running locally on the agent with the CCE fusion of relative state measurements. 
We provide geometric corrections in the algorithm to incorporate Lie group geometric insights that improve filter performance. 
The simulation results demonstrate the convergence of the proposed fusion method, and show the improvements gained from applying the correct geometric modifications with different measurement noise models.

\bibliography{reference}
\bibliographystyle{IEEEtran}
\end{document}